\documentclass[11pt]{article}
\usepackage{amsmath,amssymb,amsthm}
\usepackage[margin=1.15in]{geometry}
\usepackage{hyperref}
\usepackage{enumitem}
\usepackage{booktabs}

\newtheorem{theorem}{Theorem}

\newtheorem{corollary}{Corollary}
\newtheorem{remark}{Remark}

\title{On a Simple Relationship Between Order Imbalance,\\ Skew and Width in Over-The-Counter Trading}
\author{Peter Cotton\thanks{The work was completed around 2015, and the central symmetry was presented at NYU Tandon some time later; it was first written up in April 2022. The author thanks Pascal Tomecek, Morten Bai Andersen, Nick West and Erik Clossen for useful discussions. Joint work with Andrew Papanicolaou (\emph{Trading Illiquid Goods: Market Making as a Sequence of Sealed-Bid Auctions, with Analytic Results}, working paper) extends the model considered here to clustered arrivals and stochastically varying imbalance; the present paper isolates the original observation, which is a symmetry. Comments welcome.}}
\date{\today}

\begin{document}
\maketitle

\begin{abstract}
We consider a market maker who can only obtain and dispose of inventory by responding to a sequence of sealed-bid enquiries, and whose customers arrive with imbalanced intent: sellers more often than buyers, or the reverse. Under the assumption that the best competing response is exponentially distributed around a commonly discerned fair price, we observe a symmetry in the steady state solution that compresses the imbalanced problem onto the perfectly balanced one. Order imbalance is absorbed, exactly, by a translation of the market maker's skew, a widening of her quotes, and a multiplication of her effective cost of carry. The adjustment is simple even though the solution it adjusts is not, and it involves no free parameter beyond the observable market width. The exponential assumption is needed only locally, at the quotes actually made, and the width that enters is the locally observed one. Among the consequences: a market maker with zero inventory should still skew; skew responds to imbalance at first order whereas width responds only at second order; and the popular ``constant width, linear skew'' heuristic is recovered as the small-skew solution in the special case of balanced flow and quadratic holding cost.
\end{abstract}

\medskip
\noindent\textbf{Keywords:} market making, order imbalance, skew, bid--ask spread, request for quote, inventory cost.

\noindent\textbf{MSC codes:} 91G15, 93E20.

\section{Introduction}

The facilitation of trade by intermediaries often takes the form of a sequence of sealed-bid auctions. A customer wishing to sell an item will typically obtain several bids from dealers and select the best price; a customer wishing to buy will perform a similar search and transact with whomever offers the lowest. The market maker takes on price risk and carrying cost, and must respond to each enquiry with that in mind. She is said to mark up her offer, or mark down her bid, relative to a fair market price.

We have in mind fungible goods for which this description is apt (dealer markets in physical commodities, corporate and municipal bonds, blocks of securities, or for that matter rare baseball cards) and one empirical regularity of such markets that the standard theory does not deliver. Dealers shade their quotes with the direction of the flow. A dealer who expects the next call to be a seller quotes differently from one who expects a buyer, and does so even when her book is flat. Practitioners regard this as too obvious to remark upon. The optimal market making literature has meanwhile concentrated on the other lever: the benchmark policies descending from Avellaneda and Stoikov \cite{avellaneda2008} hold width constant and skew linearly in inventory, with buy and sell arrivals assumed symmetric.

The purpose of this paper is to show that flow-shading is not a heuristic layered on top of inventory management but an exact feature of optimal behavior, and that it follows from a symmetry. In the steady state, the market making problem facing imbalanced flow compresses onto the perfectly balanced problem: the imbalance is absorbed, without approximation, by a translation of the skew, a widening of the quotes, and a multiplication of the effective cost of carry. The three adjustments are explicit constants involving nothing but the imbalance itself and the market width.

The solution being adjusted is not simple. The indifference cost that characterizes optimal behavior must in general be computed numerically, and its shape depends on the entire cost of carry function. The imbalance correction nevertheless rides on top of any such solution unchanged, so a single balanced solve covers every level of imbalance. Nor is the exponential assumption load-bearing in its global form: only the hazard at the quotes actually made enters, a point Remark~\ref{rem:local} makes precise.

We note three consequences. A market maker with zero inventory should still skew, by half the market width times the log-odds of the arrival direction, which is approximately twice the market width times the imbalance when the imbalance is mild. The width response is second order in the imbalance while the skew response is first order, which we believe is why practitioners skew before they widen. And the constant width, linear skew heuristic reappears in the special case of balanced flow and quadratic holding cost, which locates what that benchmark omits: the flow term. The magnitudes involve no free parameter beyond the observable width and are falsifiable in principle; we make no empirical claims here.

\subsection{The CWLS benchmark. Making constant width markets with skew linear in inventory.}\label{sec:cwlsintro}

To fix terminology, a dealer's \emph{width} is the difference between where she is prepared to bid and where she is prepared to offer, and her \emph{skew} is the difference between the average of the two and her estimate of fair value. The Constant Width Linear Skew heuristic maintains a fixed width and sets skew proportional to signed inventory, with a sign that leans the dealer towards a neutral position. Motivations for CWLS range from its prima facie reasonableness to formal justification as an approximate solution of a stochastic control problem \cite{avellaneda2008,glft2012}. These justifications require some approximation, however, and they assume symmetric flow throughout. From a practical perspective CWLS is not a straw man, because it is simple, and because it sometimes \emph{is} the solution. Section~\ref{sec:closed} makes the second point precise, and the symmetry of Section~\ref{sec:theorem} supplies the benchmark's missing column: what to do about imbalance.

\subsection{Literature}\label{sec:literature}

The dealer-inventory line begins with Garman \cite{garman1976}, whose dealer faces genuinely asymmetric Poisson buy and sell rates, the oldest imbalanced-flow model, but sets a single static price pair, the content being ruin rather than policy. Amihud and Mendelson \cite{amihud1980} are the closest classical antecedent in spirit: imbalanced arrivals move a preferred inventory position and monotone quotes, though the results are structural rather than closed form. Ho and Stoll \cite{hostoll1981} quote from value function differences, and the \emph{slope} half of the slope-and-convexity characterization of Section~\ref{sec:markup} is implicit there and in everything since; the \emph{convexity} half states that the discretionary component of a dealer's width reveals the second difference of her inventory cost exactly, so that her inventory cost can be read off her quoting. We have not found the latter stated elsewhere.

Avellaneda and Stoikov \cite{avellaneda2008} anchor the modern literature. Their constant width, linear skew policy is reached by dropping terms, and the dropped terms are precisely the ones needed to establish any rule regarding width. Constant width is an artifact of the approximation, not a conclusion of the model, and the line consequently has nothing to say about how width should respond to inventory or to flow.

The exact treatment of Gu\'eant, Lehalle and Fernandez-Tapia \cite{glft2012} takes arrival intensities to be symmetric; the one directional asymmetry handled in closed form in that line is drift in the mid-price, which shifts quotes but neither widens them nor taxes carry. Gu\'eant \cite{gueant2016,gueant2017} and Bergault et al.\ \cite{bergaultgueant2021,bergault2021closed} accommodate asymmetric intensities in generality but handle them numerically. We have found no equivalence statement, no log-odds shift, and no cost multiplier in that literature.

Cartea, Jaimungal and coauthors \cite{cjr2014,cj2016,cjp2015} treat order-flow imbalance as a predictive signal, a different object from the stationary structural imbalance considered here. Bergault and Gu\'eant \cite{bergaultgueant2023} observe numerically that flow-aware market makers skew their quotes even in the absence of inventory; Corollary~\ref{cor:zeroskew} is the closed form of that observation.

The request-for-quote literature \cite{fgp2017,gueantmanziuk2019,barzykin2023} arrived some years after this work was completed. Fermanian, Gu\'eant and Pu \cite{fgp2017} provide econometric support for the win-curve assumption we make, Hendershott and Madhavan \cite{hendershott2015} document the sealed-bid mechanism itself, and Butz and Oomen \cite{butzoomen2019} document electronic FX dealers skewing on flow at flat inventory, the practice that Corollary~\ref{cor:zeroskew} derives rather than assumes.

\section{Model for an imbalanced dealer market}\label{sec:model}

Trade opportunities arrive as a Poisson process with mean inter-arrival time $\tau$. Each arrival is a customer, met simultaneously by quotes from several market makers, and the trade is done at the winning quote. At each arrival there is a chance $q \in (0,1)$ that the customer is a seller (a \emph{buy opportunity} for our market maker) and chance $1-q$ that the customer is a buyer. All trades are of size $s$. There is a fair market price commonly discerned by all dealers; in the event of a buy opportunity our market maker submits a markdown $m^{\downarrow} > 0$ below it, and in the event of a sell opportunity a markup $m^{\uparrow} > 0$ above it. She wins if her response beats the best competing response, whose displacement beyond fair value has survival function $F$ with hazard rate $h(z)$; we will shortly take the hazard constant, an assumption that Remark~\ref{rem:local} weakens to local constancy, so that the best competitor is exponentially distributed with mean
\[
w = 1/h ,
\]
which we call the \emph{market width}. Every won trade concedes a fixed amount $\epsilon$ to adverse selection: when we mark up an item by $m$ we assume we make only $m - \epsilon$. Holding inventory $x$ (a real number; negative values accommodate short positions) costs $c(x)$ per unit time with $c(0) = 0$, and the game is played forever: we study the ergodic, steady state policy.

The model deliberately leaves two things out, and the companion paper extends it in exactly these directions: arrivals may cluster, and the imbalance $q$ may itself fluctuate. The symmetry is exact in the stationary core, which is what we isolate here.

\subsection{Indifference liquidation markup}\label{sec:nu}

Suppose there is a liquidation cost $\nu(x) \ge 0$ that the market maker can pay at any time to unload her inventory, and suppose she is indifferent between paying it and carrying on. The function $\nu$ can be thought of as the dealer-to-dealer ``haircut'' at which she would clear her book. It characterizes her policy, and two discrete derivatives of it will do all the work:
\[
S(x) = \frac{\nu(x+s) - \nu(x-s)}{2s}, \qquad
C(x) = \frac{\nu(x+s) - 2\nu(x) + \nu(x-s)}{2s},
\]
the slope and convexity of the indifference cost, per unit.

\subsection{Optimal markups}\label{sec:markup}

Define break-even markups, which we call \emph{strikes}: the response levels at which winning the enquiry leaves the market maker indifferent,
\[
K^{\downarrow}(x;s) = \epsilon + \frac{\nu(x+s)-\nu(x)}{s}, \qquad
K^{\uparrow}(x;s) = \epsilon + \frac{\nu(x-s)-\nu(x)}{s},
\]
for buying and selling respectively. (Note that $K^{\uparrow}$ may well be negative for positive inventory: a loaded dealer will pay to get out.) The best response solves $\sup_m\, (m - K)\big(1-F(m)\big)$, and evidently the optimal choice of markup lies precisely where the increase in profit exactly offsets the possibility of losing the trade. For say we increased the markup by $\Delta m$; then
\[
\overbrace{\big(m^{\uparrow} - K^{\uparrow}(x;s)\big)}^{existing\ benefit}\;
\overbrace{h(m^{\uparrow})\,\Delta m}^{chance\ of\ losing\ it}
\;=\;
\overbrace{\Delta m}^{increase\ in\ profit} .
\]
The first-order condition is therefore independent of the distributional assumption on the inside market:
\begin{equation}
m^{\uparrow}(x;s) \;=\; \max\!\Big( \frac{1}{h\big(m^{\uparrow}(x;s)\big)} + K^{\uparrow}(x;s),\; 0 \Big),
\label{eq:foc}
\end{equation}
and likewise for $m^{\downarrow}$ with $K^{\downarrow}$. Under the constant hazard assumption this reads straightforwardly:
\begin{center}
``markup \;=\; market width \;+\; adverse selection \;+\; marginal inventory cost''
\end{center}
though of course the coupled quantities $\nu$ and the markups remain to be found. The first order condition itself is Lerner logic with survival-function demand and is classical; the market making content lies in the indifference-adjusted strike. Away from the region where the $\max(\cdot,0)$ binds, the two responses arrange themselves around a displaced midpoint:
\begin{equation}
\text{half-width}(x) = \Delta + C(x), \qquad
\text{mid}(x) = \text{fair value} - S(x), \qquad \Delta = w + \epsilon .
\label{eq:quotes}
\end{equation}
Slope of the inventory cost is the skew; convexity of the inventory cost is the discretionary width. These identities are exact, and the second runs both ways: a dealer's quoting reveals the first two derivatives of her inventory cost, whether or not she has ever written it down.

\subsection{The consistency equation}\label{sec:bellman}

The market maker can remain indifferent in two ways: liquidate now and re-enter, or carry to the next opportunity. Comparing the two yields a functional equation for $\nu$. With exponential markups, $\sup_m (m-K)e^{-hm} = \tfrac{1}{h}e^{-1-hK}$, and the steady state condition takes the form: for all $x$,
\begin{equation}
\frac{\tau\, c(x)}{s}
\;=\; \frac{e^{-1-h\epsilon}}{h}\,
\Big[\, e^{-hC(x)}\Big(\,q\,e^{-hS(x)} + (1-q)\,e^{hS(x)}\Big) \;-\; \big(\text{same expression at } x=0\big) \Big].
\label{eq:bellman}
\end{equation}
In words: the direct cost of carrying $x$ must be exactly offset by the differential option value of the next trading opportunity at $x$ versus at zero inventory. This is why common-sense holding cost arguments, mean holding time multiplied by carrying cost and so forth, systematically overstate the cost of inventory and encourage overly defensive quoting. They ignore that a larger inventory makes the next enquiry \emph{more} valuable, since the potential to get out helps more.

\section{The symmetry}\label{sec:theorem}

The paper rests on an elementary identity: for any $S$,
\begin{equation}
q\,e^{-hS} + (1-q)\,e^{hS}
\;=\; 2\sqrt{q(1-q)}\;\cosh\!\big(h(S - \delta)\big),
\qquad
\delta = \frac{1}{2h}\,\log\frac{q}{1-q} .
\label{eq:identity}
\end{equation}
An imbalanced mixture of exponentials is a balanced one in disguise: translated by $\delta$, and rescaled by $2\sqrt{q(1-q)} \le 1$. Writing $\gamma = \tfrac{1}{h}\log\tfrac{1}{2\sqrt{q(1-q)}} \ge 0$ for the rescaling expressed as a width, we have immediately:

\begin{theorem}[The imbalanced problem compresses onto the balanced one]\label{thm:main}
The steady state market making problem with arrival imbalance $q$ and carrying cost $c(\cdot)$ has the same solution as the perfectly balanced problem ($q = \tfrac12$) with carrying cost $M(q)\,c(\cdot)$, where
\begin{equation}
M(q) \;=\; e^{h\gamma} \;=\; \frac{1}{2\sqrt{q\,(1-q)}} \;\ge\; 1,
\label{eq:multiplier}
\end{equation}
under the correspondence: skew translated by $\delta$, and non-discretionary width $\Delta$ widened to $\Delta + \gamma$.
\end{theorem}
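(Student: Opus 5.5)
The plan is to work entirely with the steady-state characterization (\ref{eq:bellman}) together with the quote identities (\ref{eq:quotes}), and to show that the identity (\ref{eq:identity}) turns the imbalanced consistency equation into the balanced one after a relabelling of unknowns. Concretely, given a solution $\nu$ of (\ref{eq:bellman}) at imbalance $q$, I will define a candidate balanced solution
\[
\tilde\nu(x) = \nu(x) - \delta\, x .
\]
A linear term has zero second difference and constant first difference, so $\tilde C(x) = C(x)$ and $\tilde S(x) = S(x) - \delta$ for every $x$. Since the same map sends $x=0$ to $x=0$, the subtracted ``same expression at $x=0$'' term transforms in exactly the same way. The correspondence is therefore pointwise in $x$ and involves no fixed-point argument.

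Next I will substitute (\ref{eq:identity}) into the bracket of (\ref{eq:bellman}), term by term at $x$ and at $0$. The bracket becomes
\[
2\sqrt{q(1-q)}\,\Big[e^{-h\tilde C(x)}\cosh\!\big(h\tilde S(x)\big) - \big(\text{same at }0\big)\Big].
\]
Setting $q=\tfrac12$ in (\ref{eq:identity}) shows that $e^{-hC}\cosh(hS)$ is exactly the balanced bracket. Dividing both sides by $2\sqrt{q(1-q)} = 1/M(q)$ gives
\[
\frac{\tau\, M(q)\, c(x)}{s} = \frac{e^{-1-h\epsilon}}{h}\Big[\ldots\Big]_{q=1/2},
\]
which is the balanced consistency equation for $\tilde\nu$ with carry $M(q)c$. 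The argument runs both ways, so solutions correspond bijectively.

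For the second reading of the same factor, I will instead leave $c$ alone and absorb $e^{-h\gamma} = 2\sqrt{q(1-q)}$ into the prefactor, using
\[
e^{-1-h\epsilon}\,e^{-h\gamma} = e^{-1-h(\epsilon+\gamma)} .
\]
This is the balanced problem with $\epsilon$ replaced by $\epsilon+\gamma$, that is, with $\Delta$ widened to $\Delta+\gamma$. Finally I will translate back to quotes via (\ref{eq:quotes}). The relation $\text{mid} = \text{fair} - S = \text{fair} - \tilde S - \delta$ gives the translation of skew by $\delta$, and $\tilde C = C$ shows that the discretionary width is shared.

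The main obstacle is making precise what ``same solution'' means, rather than the algebra. There are three points to settle.
\begin{enumerate}[label=(\roman*)]
\item Equation (\ref{eq:bellman}) and (\ref{eq:quotes}) were derived away from the $\max(\cdot,0)$ truncation in (\ref{eq:foc}). I will state the correspondence on the region where neither markup is floored at zero.
\item The shift $\tilde\nu$ may violate $\nu\ge 0$. I will argue that only the differences $S$ and $C$, together with the normalization $\nu(0)=0$, enter the policy and (\ref{eq:bellman}), so the sign condition is not structural. The linear term carries no economic cost, because it is just a reference-price change.
\item The two readings (multiplied carry versus widened $\Delta$) must both be recorded as equivalent scalings of a single factor. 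This justifies stating the theorem with both the multiplier $M(q)$ and the widening $\gamma$.
\end{enumerate}
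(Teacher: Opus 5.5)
Your proposal is correct and follows the paper's proof. Like the paper, you substitute the cosh identity into the consistency equation and read the factor $2\sqrt{q(1-q)}=e^{-h\gamma}$ either as the carry multiplier $M(q)$ or as a shift of $\epsilon$ by $\gamma$. Your explicit $\tilde\nu=\nu-\delta x$ makes precise a step the paper leaves implicit: that $(C,S-\delta)$ are the differences of a genuine indifference function.

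Your reading in (iii) is the right one. $M(q)$ and $\gamma$ are two descriptions of a single factor and must not be applied together. Since $\tilde C=C$, the $\gamma$ widens the effective $\Delta$ in the consistency equation, not the imbalanced dealer's posted width.
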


\begin{proof}
Substitute \eqref{eq:identity} into \eqref{eq:bellman} and write $S_\delta(x) = S(x) - \delta$:
\[
\frac{\tau\,c(x)}{s} = \frac{e^{-1-h\epsilon-h\gamma}}{h}
\Big[ e^{-hC(x)}\cosh\big(hS_\delta(x)\big) - e^{-hC(0)}\cosh\big(hS_\delta(0)\big) \Big],
\]
which is the balanced consistency equation in the pair $(C, S_\delta)$, with the cost side inflated by $e^{h\gamma} = M(q)$.
\end{proof}

\begin{remark}[The adjustment is simple; the solution is not]
The theorem neither requires nor provides a closed form. In general $\nu$ must be computed numerically, and everything about its shape depends on $c$. What the symmetry provides is the correction: whatever the balanced solution is, the imbalanced one is that same object translated by $\delta$, widened by $\gamma$, and evaluated at carrying cost $M(q)\,c$. One balanced solve serves every imbalance.
\end{remark}

\begin{remark}[Exponentiality is needed only locally]\label{rem:local}
The constant hazard need not hold globally. The distribution of the best competing quote enters through one object only: the value of the next enquiry as a function of its strike, $G(K) = \sup_{m}\,(m-K)\big(1-F(m)\big)$. For any smooth $F$ the envelope theorem gives $G'(K) = -(1-F(m^{*}(K)))$, and dividing by $G$,
\[
-\frac{d}{dK}\,\log G(K) \;=\; \frac{1}{m^{*}(K)-K} \;=\; h\big(m^{*}(K)\big),
\]
exactly. The proof uses only the exponential form of $G$, so the requirement is that the hazard be constant at the quotes actually made, across the bounded range of strikes generated by the inventories the dealer visits. If the hazard varies slowly there, every statement above holds with $w$ the reciprocal hazard at the operating point, and the error is of the order of the hazard's relative variation across that range. We checked this numerically in \texttt{verify\_local\_exponentiality.py}, which accompanies the paper: the error in the zero-inventory skew is about a quarter of the hazard's relative variation across the visited strikes, and shrinks linearly with it.
\end{remark}

\begin{remark}[The algebra is old]
We arrived at the substitution in the proof independently and noticed its provenance only afterwards. It is a birth--death symmetrization, and the geometric mean of the rates it produces is familiar from the spectral theory of birth and death processes \cite{ledermann1954,karlinmcgregor1957} and from the transient analysis of the simple queue \cite{bailey1954}. It could equally be applied to the linear systems of the limit order book literature with asymmetric intensities, though to our knowledge nobody has done so. What we claim is the economics: imbalance translates the skew, widens the quotes, and taxes the carry, with all three effects tied to a single observable.
\end{remark}

\begin{corollary}[A flat book still skews]\label{cor:zeroskew}
By symmetry the balanced solution has $S_\delta(0) = 0$, hence
\begin{equation}
S(0) \;=\; \delta \;=\; \frac{w}{2}\,\log\frac{q}{1-q} \;\approx\; 2\,w\,\Big(q - \tfrac12\Big)
\quad\text{for mild imbalance.}
\label{eq:zeroskew}
\end{equation}
When customers are predominantly sellers the market maker marks her midpoint down by $\delta$ before holding any inventory at all. Skew is a flow phenomenon first and an inventory phenomenon second.
\end{corollary}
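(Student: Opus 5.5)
The plan is to push the corollary through Theorem~\ref{thm:main} and reduce it to a reflection symmetry of the balanced problem. By the theorem, the imbalanced solution's skew is $S(x) = S_\delta(x) + \delta$. Here $S_\delta$ is the skew of the balanced ($q=\tfrac12$) problem with carrying cost $M(q)\,c$. So it suffices to show $S_\delta(0)=0$ for that balanced problem. The displayed formula then follows by substituting $h = 1/w$ into the definition of $\delta$ in \eqref{eq:identity}, and the approximation follows from expanding $\log\frac{q}{1-q}$ at $q=\tfrac12$.

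For the symmetry step I would use the map $x \mapsto -x$. Suppose $\nu$ solves the balanced consistency equation with cost $M c$, and set $\tilde\nu(x) = \nu(-x)$. From the definitions in Section~\ref{sec:nu}:
\begin{itemize}
\item $\tilde S(x) = -S(-x)$;
\item $\tilde C(x) = C(-x)$.
\end{itemize}
Since $\cosh$ is even, the balanced equation evaluated at $-x$ becomes the same equation for $\tilde\nu$, with cost $\tilde c(x) = c(-x)$.

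This step therefore needs an assumption the corollary leaves implicit: either $c$ is even, or one invokes uniqueness of the steady-state $\nu$ up to an additive constant. Under an even $c$ plus uniqueness, $\tilde\nu = \nu$, so $S$ is odd and $S(0) = 0$. I would state this explicitly as the reading of ``by symmetry''.

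The main obstacle is that $S(0)$ is a discrete slope, $(\nu(s)-\nu(-s))/2s$, so $S(0)=0$ holds only if $\nu(s)=\nu(-s)$. That requires the symmetry argument above; it cannot be read off from a single evaluation of the consistency equation at $x=0$, which holds trivially. The first fallback would be to avoid a full uniqueness theorem. Instead I would argue that the ergodic policy is the limit of value iteration started from $\nu\equiv 0$. That starting point is even, and each iterate preserves evenness when $c$ is even and $q=\tfrac12$. Hence the limit is even.

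The approximation itself is routine:
\[
\log\frac{q}{1-q} = 4\left(q-\tfrac12\right) + O\!\left(\left(q-\tfrac12\right)^3\right),
\]
so $\delta \approx 2w\left(q-\tfrac12\right)$. The sign interpretation also follows directly. If $q>\tfrac12$, then $\delta>0$, and by \eqref{eq:quotes} the midpoint $\text{fair value} - S(0)$ sits below fair value by $\delta$.
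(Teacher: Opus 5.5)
Your proposal is correct and takes the paper's route. Theorem~\ref{thm:main} reduces the claim to $S_\delta(0)=0$ for the balanced problem, which the paper dispatches with the phrase ``by symmetry''; the rest is $h=1/w$ plus the expansion $\log\frac{q}{1-q}=4(q-\tfrac12)+O((q-\tfrac12)^3)$.

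Your reflection $\tilde\nu(x)=\nu(-x)$ is exactly what that phrase abbreviates. You are right that it needs an even $c$ \emph{together with} uniqueness or a symmetric selection such as value iteration from $\nu\equiv 0$. These are joint requirements, not alternatives as your first phrasing says. The paper leaves both implicit, and they are not vacuous: for $c\equiv 0$, every linear $\nu$ solves the balanced equation, so uniqueness fails.
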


\begin{corollary}[Skew is first order in imbalance; width is second order]\label{cor:orders}
Expanding about $q = \tfrac12$,
\[
\delta \approx 2w\Big(q-\tfrac12\Big), \qquad
\gamma \approx 2w\Big(q-\tfrac12\Big)^{2}, \qquad
M(q) \approx 1 + 2\Big(q-\tfrac12\Big)^{2}.
\]
A mild imbalance moves the midpoint linearly and the width only quadratically. This accords with the practitioner's instinct to skew into flow and widen only when the market becomes seriously one-sided. Both responses involve no parameter beyond the observable width $w$; the skew flips sign between net-selling and net-buying regimes while the widening, being even in $q - \tfrac12$, does not.
\end{corollary}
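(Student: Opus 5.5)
The statement to prove is Corollary~\ref{cor:orders}. It is a second-order Taylor expansion of the three explicit constants $\delta$, $\gamma$ and $M(q)$ from Section~\ref{sec:theorem} about $q=\tfrac12$. I will write $q = \tfrac12 + u$, so that $1-q = \tfrac12 - u$, and expand everything in powers of $u$, using $w = 1/h$ throughout.

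For the skew I would write
\[
\log\frac{q}{1-q} = \log\frac{1+2u}{1-2u} = 2\,\mathrm{artanh}(2u) = 4u + O(u^3).
\]
Then $\delta = \tfrac{w}{2}\log\tfrac{q}{1-q} = 2wu + O(u^3)$, which is the first claim and also recovers the approximation in Corollary~\ref{cor:zeroskew}. The same display shows that $\delta$ is odd in $u$, so it flips sign between the net-selling and net-buying regimes.

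For the multiplier I would note $q(1-q) = \tfrac14 - u^2$, so $2\sqrt{q(1-q)} = \sqrt{1-4u^2}$. Hence
\[
M(q) = (1-4u^2)^{-1/2} = 1 + 2u^2 + O(u^4).
\]
For the widening, $\gamma = w\log M(q) = -\tfrac{w}{2}\log(1-4u^2) = 2wu^2 + O(u^4)$. Both $M$ and $\gamma$ depend on $u$ only through $u^2$, so they are even in $q-\tfrac12$. This gives the parity claim and shows that the width response has no linear term.

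The interpretive sentences then follow from Theorem~\ref{thm:main}. Under its correspondence the midpoint moves by $\delta$, which is linear in $u$, while the non-discretionary width moves by $\gamma$, which is quadratic. The only remaining point is a remark: the discretionary width $C$ of the balanced solve also changes, because the carrying cost is scaled by $M(q)$. That scaling is itself $1+O(u^2)$, so, assuming the balanced solution depends smoothly on the carrying cost, the change in $C$ is second order as well.

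There is no real obstacle in this argument. The one step needing care is this smoothness assumption, which I would state explicitly rather than prove.
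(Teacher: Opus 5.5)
Your expansions of $\delta$, $\gamma$ and $M(q)$ in $u=q-\tfrac12$ and your parity observations are correct, and they are exactly what the paper does: it gives no separate proof and reads the corollary straight off the explicit constants of Theorem~\ref{thm:main}. One caution on your closing remark: by \eqref{eq:quotes} the quoted half-width is $w+\epsilon+C(x)$, with $q$ entering only through $C$, so ``$\Delta\to\Delta+\gamma$'' and ``$c\to M(q)c$'' are two alternative ways of absorbing the single factor $e^{-h\gamma}$ in the consistency equation, not two additive effects, and the genuine width response is therefore the change in $C$ alone---still $O\big((q-\tfrac12)^2\big)$ under your smoothness assumption, but dependent on $c$ and not only on $w$.
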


\begin{corollary}[Imbalance acts as a carrying cost]\label{cor:carry}
By \eqref{eq:multiplier}, one-sided flow is indistinguishable from a multiplied cost of carry, without bound as $q \to 0$ or $1$. A market maker in a good that costs nothing to hold is nonetheless taxed on inventory whenever the flow is one-sided, and taxed at a rate she can compute from the tape.
\end{corollary}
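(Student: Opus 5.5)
The claim is that one-sided flow is equivalent to a multiplied cost of carry, $M(q)=1/(2\sqrt{q(1-q)})$, and that this multiplier is unbounded as $q\to 0$ or $1$. Both parts follow from Theorem~\ref{thm:main}, so the plan is to read the claim off the rescaled consistency equation and then check the limits.

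\emph{Step 1: the equivalence.} I start from the form of \eqref{eq:bellman} produced in the proof of Theorem~\ref{thm:main}. Dividing through by $e^{-h\gamma}$ gives
\[
\frac{\tau\,M(q)\,c(x)}{s} = \frac{e^{-1-h\epsilon}}{h}\Big[e^{-hC(x)}\cosh\big(hS_\delta(x)\big)-e^{-hC(0)}\cosh\big(hS_\delta(0)\big)\Big].
\]
This is exactly the balanced equation (the $q=\tfrac12$ case of \eqref{eq:bellman}, where $2\sqrt{q(1-q)}=1$) with carrying cost $M(q)c$. The same identification applies to the convexity $C$ and the translated slope $S_\delta$.

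\emph{Step 2: the quotes match.} I use the slope--convexity identities \eqref{eq:quotes}. The discretionary width $C$ is unchanged, and the mid shifts only by the constant $\delta$, which does not depend on inventory. So the entire inventory-dependent part of the quoting policy under flow $q$ coincides with that of a balanced dealer paying $M(q)c$. This is the sense in which imbalance is ``indistinguishable'' from a higher carry.

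\emph{Step 3: unboundedness and the zero-cost case.}
\begin{itemize}
\item As $q\to 0$ or $q\to 1$, the product $q(1-q)\to 0$, so $M(q)\to\infty$. On $(0,1)$ we have $M(q)\ge 1$, with equality only at $q=\tfrac12$, by AM--GM.
\item For a good that costs nothing to hold, I need care: $M(q)\cdot 0=0$, so a literal multiplicative statement says nothing. Instead I interpret the ``tax'' through $\gamma=\frac1h\log M(q)$. This is the extra non-discretionary width from Theorem~\ref{thm:main}, and in the steady state equation it plays the same role as an increased carry, via the factor $e^{-h\gamma}$ on the option value of future enquiries. Equivalently, the stated relation means that any positive carry $c$ is seen at $M(q)c$.
\end{itemize}

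\emph{Step 4: computability from the tape.} Every quantity is observable. The rate $q$ is the fraction of seller arrivals, and $w=1/h$ is the observed market width. Hence $M$, $\gamma$ and $\delta$ are all computable from trade data.

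\emph{Main obstacle.} The only delicate point is Step 3's zero-cost case, where the multiplicative interpretation degenerates. I would handle it by stating the corollary in terms of $\gamma$ and $M$ acting on whatever carry is present, rather than claiming a literal nonzero cost.
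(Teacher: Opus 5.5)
Your treatment of the first sentence is the paper's argument. The corollary is read straight off \eqref{eq:multiplier}, with $M(q)\to\infty$ because $q(1-q)\to0$, and your Steps 1, 2 and 4 are fine. You are also right that the multiplicative reading is empty when $c\equiv0$; the paper's one-line justification passes over this.

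The gap is in your repair. With $c\equiv0$ the substituted form of \eqref{eq:bellman} reads $0=h^{-1}e^{-1-h\epsilon-h\gamma}[\cdots]$. The bracket must therefore vanish identically, and $\gamma$ drops out together with $\tau$ and $\epsilon$. It cannot act like an increased carry on anything. Nor can you reuse it as an extra width. The factor $e^{-h\gamma}$ can be absorbed into the carry (as $M(q)c$) or into $\epsilon$ (as $\epsilon+\gamma$), but not both. In Step 2 you already spent it on $M(q)c$, and, correctly by \eqref{eq:quotes}, matched the imbalanced quotes to balanced ones with the same $\Delta$. Your fallback, restricting the claim to whatever carry is present, simply drops the second sentence.

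What is missing is that at zero carry the imbalance acts through $\delta$ and through the \emph{level} of the Bellman equation, which \eqref{eq:bellman} differences away. With $c\equiv0$ the differenced equation only says that $e^{-hC}\cosh(hS_\delta)$ is constant, and both $\nu\equiv0$ and $\nu(x)=\delta x$ satisfy it.

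\begin{itemize}
\item The first leaves quotes unskewed, so inventory drifts at a rate proportional to $2q-1$ and there is no steady state.
\item The steady-state requirement, read as the limit of a small carry that confines inventory, selects the trivial balanced solution $C\equiv0$, $S_\delta\equiv0$, as in Corollary~\ref{cor:zeroskew}.
\item Hence $S\equiv\delta$ and $\nu(x)=\delta x$. This is a charge of $\delta=\frac w2\log\frac{q}{1-q}$ per unit on inventory held with the flow (long when sellers dominate), with a matching credit on the other side.
\end{itemize}

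At that skew buy and sell fills balance: $qe^{-h\delta}=(1-q)e^{h\delta}=\sqrt{q(1-q)}$. So the value of every enquiry, and with it her profit rate, is $2\sqrt{q(1-q)}=1/M(q)$ times the balanced value. That is the tax on a costless good, at a rate computable from direction counts and the win curve. It is also where $M(q)$ genuinely acts when $c\equiv0$.
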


\section{Constant width solutions, and where CWLS sits}\label{sec:closed}

Suppose the convexity is constant, $C(x) \equiv C_0$: a constant width market. The consistency equation then inverts to give the skew directly,
\begin{equation}
S_\delta(x) \;=\; \frac1h \cosh^{-1}\!\Big( e^{hC_0}\,\Omega\,c(x) + \cosh\big(hS_\delta(0)\big) \Big),
\qquad \Omega = \tfrac{\tau h}{s}\,e^{1+h\epsilon+h\gamma},
\label{eq:coshinv}
\end{equation}
an easily implementable constant width model with, in general, \emph{nonlinear} skew. For small argument the right side is approximately proportional to $\sqrt{c(x)}$, so a linear component in $c$ (a physical holding fee) puts a square-root kink in the skew at the origin, and funding costs or equity hurdles bend it elsewhere.

Where, then, is CWLS exactly optimal? Requiring $C(x) \equiv C_0$ together with consistency of \eqref{eq:bellman} and non-negative liquidation costs forces
\begin{equation}
\nu(x) \;=\; \frac{C_0}{s}\,x^{2} \;+\; \delta\,x \;+\; C_2 ,
\label{eq:nuquad}
\end{equation}
with the imbalance appearing as the \emph{linear term of the indifference cost}, and feeding \eqref{eq:nuquad} back through \eqref{eq:bellman} pins the cost of carry to the shape $c(x) \propto \cosh(2hC_0 x/s) - 1$. The quadratic term prices inventory risk; the linear term prices the flow; and the skew $S(x) = 2\tfrac{C_0}{s}\,x + \delta$ is linear with an intercept.

CWLS is therefore exactly correct in one corner: balanced flow and a cosh-shaped cost of carry, a shape that is quadratic up to a relative error of $(hS_\delta(x))^{2}/12$. For a genuinely quadratic cost, CWLS is accurate while the skew is small relative to the width, with error quadratic in that ratio. The symmetry supplies the benchmark's missing column: under imbalance, keep the linear response to inventory, add the intercept $\delta$, and widen by $\gamma$. The moral for anyone outside the corner is procedural: solve for $\nu(x)$ rather than pretend $\nu \propto x^2$ works, since the quadratic ansatz is internally inconsistent for any other cost of carry.

\section{Uses}\label{sec:uses}

The equivalence was stated as a fact about optimal policy, but its practical content is broader, and we note five uses.

\emph{Fill ratios.} At the optimum, profitability factors cleanly. By \eqref{eq:foc} the optimal response sits the distance $w$ beyond its strike, so the expected net gain on a won trade is $s\,w$ at every level of inventory, and all variation in expected profit per enquiry is variation in the fill ratio. Under the constant hazard the fill ratio on a quote at markup $m$ is $e^{-m/w}$, log-linear in the markup with slope $-1/w$, which is how $w$ is estimated from a quote archive; at the optimal response to strike $K$ it equals $e^{-1-K/w}$. In particular a dealer facing zero adverse selection and zero marginal inventory cost should, if she quotes optimally, win $e^{-1} \approx 37\%$ of her enquiries, and persistent departures from that ratio measure her effective strike.

\emph{Testing the symmetry.} Request-for-quote archives record direction, quotes, and fills. Direction counts estimate $q$, the win curve estimates $w$, with Remark~\ref{rem:local} indicating that the local estimate is the appropriate one, and the theorem then predicts the zero-inventory skew $\delta = \tfrac{w}{2}\log\tfrac{q}{1-q}$ and the widening $\gamma$ with no free parameter. Cross-sectionally, instruments with more one-sided flow should show mids shifted at first order in the imbalance and quotes widened at second order, per Corollary~\ref{cor:orders}.

\emph{Reading a dealer's book.} The identities \eqref{eq:quotes} run both ways: skew and discretionary width are the first two differences of the indifference cost, so a counterparty's quoting reveals her inventory cost function, and a shift in her zero-inventory skew reveals her view of the flow.

\emph{Increasing the statistical power of learned policies.} Reinforcement learning approaches to market making \cite{gueantmanziuk2019} generalize the present model and need not assume an exponential win curve. The symmetry helps them anyway, reinterpreted as an approximate heuristic: experience gathered at one imbalance transfers to every other through the correction $(\delta, \gamma, M(q))$, so a learner can pool episodes across flow regimes instead of learning each regime separately. An invariance of this sort reduces the effective dimension of the policy space, and the sample sizes required shrink accordingly.

\emph{Assessing decisions after the fact.} Illiquidity implies short time series, and short time series make back-testing unreliable; the model instead prices individual decisions. Comparing a trader's response $m$ against the model's $m^{*}$ on a single enquiry:

\begin{center}
\begin{tabular}{@{}lcc@{}}
\toprule
 & model trades & model misses \\
\midrule
trader trades & $s\,(m - m^{*})$ & $s\,(m - m^{*}) + s\,w$ \\
trader misses & $-\,s\,w$ & $0$ \\
\bottomrule
\end{tabular}
\end{center}

\noindent Every cell is computable from observables at the moment of the enquiry, and the running total is a score of quoting skill that separates flow-reading (the $\delta$ term) from inventory management (the $S_\delta$ term). Profit and loss, by contrast, mixes both with luck. The constant margin under exponential demand is itself a classical property; the diagnostic built on it is not.


\begin{thebibliography}{19}
\bibitem{garman1976} Garman, M.~B. (1976). Market microstructure. \emph{Journal of Financial Economics} 3(3), 257--275.
\bibitem{amihud1980} Amihud, Y., Mendelson, H. (1980). Dealership market: market-making with inventory. \emph{Journal of Financial Economics} 8(1), 31--53.
\bibitem{hostoll1981} Ho, T., Stoll, H.~R. (1981). Optimal dealer pricing under transactions and return uncertainty. \emph{Journal of Financial Economics} 9(1), 47--73.
\bibitem{avellaneda2008} Avellaneda, M., Stoikov, S. (2008). High-frequency trading in a limit order book. \emph{Quantitative Finance} 8(3), 217--224.
\bibitem{glft2012} Gu\'eant, O., Lehalle, C.-A., Fernandez-Tapia, J. (2012). Dealing with the inventory risk: a solution to the market making problem. \emph{Mathematics and Financial Economics} 7(4), 477--507.
\bibitem{gueant2016} Gu\'eant, O. (2016). \emph{The Financial Mathematics of Market Liquidity: From Optimal Execution to Market Making}. Chapman \& Hall/CRC.
\bibitem{gueant2017} Gu\'eant, O. (2017). Optimal market making. \emph{Applied Mathematical Finance} 24(2), 112--154.
\bibitem{cjr2014} Cartea, \'A., Jaimungal, S., Ricci, J. (2014). Buy low, sell high: a high frequency trading perspective. \emph{SIAM Journal on Financial Mathematics} 5(1), 415--444.
\bibitem{cj2016} Cartea, \'A., Jaimungal, S. (2016). Incorporating order-flow into optimal execution. \emph{Mathematics and Financial Economics} 10(3), 339--364.
\bibitem{cjp2015} Cartea, \'A., Jaimungal, S., Penalva, J. (2015). \emph{Algorithmic and High-Frequency Trading}. Cambridge University Press.
\bibitem{fgp2017} Fermanian, J.-D., Gu\'eant, O., Pu, J. (2017). The behavior of dealers and clients on the European corporate bond market: the case of multi-dealer-to-client platforms. \emph{Market Microstructure and Liquidity} 2(3--4), 1750004.
\bibitem{gueantmanziuk2019} Gu\'eant, O., Manziuk, I. (2019). Deep reinforcement learning for market making in corporate bonds: beating the curse of dimensionality. \emph{Applied Mathematical Finance} 26(5), 387--452.
\bibitem{bergaultgueant2021} Bergault, P., Gu\'eant, O. (2021). Size matters for OTC market makers: general results and dimensionality reduction techniques. \emph{Mathematical Finance} 31(1), 279--322.
\bibitem{bergault2021closed} Bergault, P., Evangelista, D., Gu\'eant, O., Vieira, D. (2021). Closed-form approximations in multi-asset market making. \emph{Applied Mathematical Finance} 28(2), 101--142.
\bibitem{barzykin2023} Barzykin, A., Bergault, P., Gu\'eant, O. (2023). Algorithmic market making in dealer markets with hedging and market impact. \emph{Mathematical Finance} 33(1), 41--79.
\bibitem{hendershott2015} Hendershott, T., Madhavan, A. (2015). Click or call? Auction versus search in the over-the-counter market. \emph{Journal of Finance} 70(1), 419--447.
\bibitem{butzoomen2019} Butz, M., Oomen, R. (2019). Internalisation by electronic FX spot dealers. \emph{Quantitative Finance} 19(1), 35--56.
\bibitem{bergaultgueant2023} Bergault, P., Gu\'eant, O. (2023). Liquidity dynamics in RFQ markets and impact on pricing. arXiv:2309.04216.
\bibitem{ledermann1954} Ledermann, W., Reuter, G.~E.~H. (1954). Spectral theory for the differential equations of simple birth and death processes. \emph{Philosophical Transactions of the Royal Society A} 246(914), 321--369.
\bibitem{karlinmcgregor1957} Karlin, S., McGregor, J. (1957). The differential equations of birth-and-death processes, and the Stieltjes moment problem. \emph{Transactions of the American Mathematical Society} 85(2), 489--546.
\bibitem{bailey1954} Bailey, N.~T.~J. (1954). A continuous time treatment of a simple queue using generating functions. \emph{Journal of the Royal Statistical Society, Series B} 16(2), 288--291.
\end{thebibliography}
\end{document}